\newtheorem{Theorem}{Theorem}
\newcommand{\bigO}{\mathcal{O}}
\begin{document}

\title{Quantum algorithm for Ewald summation based computation of long-range electrostatics}

\author{Mansur Ziiatdinov}
\affiliation{MIFT Department, Università degli Studi di Messina, Messina 98122, Italy}

\author{Igor Novikov}
\affiliation{Chemistry Department, University of Massachusetts, Lowell, MA 01854, United States}

\author{Farid Ablayev}
\affiliation{Institute of Computational Mathematics and Information Technologies, Kazan Federal University, 420008, Russia}

\author{Valeri Barsegov}
\affiliation{Chemistry Department, University of Massachusetts, Lowell, MA 01854, United States}
\email{Valeri\_Barsegov@uml.edu}

\maketitle

\begin{abstract}
  In computational molecular science, calculation of electrostatic interactions involving charged atoms -- the strongest interactions in condensed phases, is a major bottleneck. We propose a quantum-classical algorithm for fast, yet, accurate computation of the Coulomb electrostatic energy for a system of point charges. The algorithm employs the Ewald method based decomposition of electrostatic energy into several energy terms, of which the ``Fourier component'' (long-range electrostatics) computed on a quantum device, utilizing the power of Quantum Fourier Transform (QFT). We demonstrate that the algorithm complexity is {$\mathbf{N \bf{log} M}$} and that the quantum advantage for a system of point charges in the three-dimensional space is achieved when the number of grid points $\mathbf{M^3}$ exceeds the number of charges $\mathbf{N}$. The numerical error is small $\mathbf{<10^{-3}}$. The algorithm can be implemented to run the all-atom Molecular Dynamics simulations on a quantum device requiring 15 qubits, thereby expanding the scope of applications of QFT-based methods to computational chemistry and biophysics.
\end{abstract}


\section{Introduction}
Modern biological science faces a pressing need to solve fundamental problems, including protein folding, dynamics of DNA and RNA, drug discovery, genome assembly, cell division, extracellular matrix remodeling to name a few \cite{Protein_folding, Cell_division, DNA_dynamics}; yet, these problems cannot be solved using experiments alone. Consequently, computational exploration connecting biology, chemistry and physics has become an important tool enabling researchers to use the power of computers to describe biological processes \cite{Separat_of_domains, Sequence_comparison, Lattice_microbes}. However, numerical exploration of real large-size biological systems (e.g. genome, viruses, cells, etc.) requires computational power far exceeding that of modern classical computers. Quantum computers offer a unique opportunity to solve numerical problems that are not feasible on classical computers~\cite{arute2019quantum,wu2021strong,hempel2018quantum}.

Molecular science is a key application area for quantum computing. Molecular Dynamics (MD) simulations are an important numerical tool to explore the properties of biological macromolecules (RNA, DNA, and proteins)~\cite{karplus2002molecular, MD_perspective}. Due to charged nature of these molecules and to polar aqueous environments that host these molecules \cite{Electrostatics, Electrostatic_calculations}, it is necessary to describe electrostatic interactions between charged atoms -- the strongest interactions in condensed phases. This task is accomplished by splitting electrostatic forces (or energies) into the short-range contribution and the long-range contribution, with the latter being the main computational bottleneck \cite{Com_sim_of_liquids}. The Ewald summation method~\cite{ewald1921evaluation} performs an excellent job at splitting the slowly converging sum over the Coulomb energies (or forces) into the sums that converge exponentially fast. Yet, the Ewald method remains computationally demanding because a part of the problem (long-range electrostatics) is solved in the reciprocal space under Periodic Boundary Conditions (PBC), which requires using the Fourier transformation~\cite{Pollock_1996_Comments_P3M_FMM_Ewald}.

The point-charge description of atoms is widely used in current state-of-art force fields for biological molecules, including OPLS~\cite{jorgensen1996opls}, Charmm~\cite{brooks2009charmm}, GROMOS~\cite{GROMOS} and Amber~\cite{salomon2013amber}. The point charges are described by the Coulomb interaction potential, which shows a slow $\sim 1/r$ decay with the interparticle separation distance $r$. The particle-particle-particle-mesh (P$^3$M) method~\cite{eastwood1980p3m3dp}, the particle mesh Ewald (PME) method~\cite{darden1993pme}, and the smooth particle mesh Ewald (SPME) method~\cite{essmann1995spme} all implement fast Fourier transform (FFT)~\cite{Pollock_1996_Comments_P3M_FMM_Ewald}. This enables one to reduce the computational complexity for the reciprocal part of the Ewald sum estimation (long-range electrostatics) to the order $N$$\text{log}N$. If the real space distance cutoff is small enough, then the $N$$\text{log}N$ scaling law also applies to the complete Ewald sum~\cite{deserno1998_mesh_ewald_i}. Although FFT is a grid transformation, discretization conflicts can be resolved and associated discretization errors can be mitigated~\cite{deserno1998_mesh_ewald_ii}.

Here, we propose a new quantum algorithm for the Ewald summation-based calculation of long-range electrostatics, which utilizes the Quantum Fourier Transform (QFT)\cite{nielsen2010quantum}. Current NISQ (Noisy Intermediate-Scale Quantum) era quantum computers are characterized by a limited number of qubits and significant noise, and so these computers are not yet fault-tolerant or scalable enough to achieve a full quantum advantage. For this reason, we propose an algorithm, which takes advantage of both the Quantum Processing Units (QPUs) for computation of the long-range electrostatics, and classical Central Processing Units (CPUs) for calculation of the short-range electrostatics as well as self-interaction and dipole interaction contributions. Because quantum devices that will become available in the near future will likely have limitations in the types of arithmetic operations, here we restrict our hybrid quantum-classical algorithm to one- and two-qubit operations. 
\section{Methods}
\subsection{Quantum computation}\label{quantum-basics}

\noindent
{\bf Quantum state:} While a classical computer with \(n\) bits can exist in one of the \(2^n\) possible states, a quantum computer with \(n\) quantum bits (qubits) can exist in all of these states simultaneously. A state of quantum computer $\ket{\psi}$ can be described as a complex-valued \(2^n\)-dimensional vector \(\ket{\psi} = \sum_{j=0}^{N-1} \alpha_j \ket{j}\), where $\{\ket{j}\}$ are the normalized eigenstates of the system's Hamiltonian $H$ (tensor products of the basis states of \(n\) qubits), and $\alpha_{j}$ are constant coefficients.

\noindent
{\bf The Hadamard gate $H$} is a single-qubit operator, which can be represented by the 2$\times$2 matrix:
$$H = \frac{1}{\sqrt{2}} \begin{bmatrix} 1 & 1 \\ 1 & -1 \end{bmatrix}$$
$H$ transforms the eigenstate $\ket{0}$ into a mixed state $\frac{1}{\sqrt{2}}(\ket{0} + \ket{1})$, and the eigenstate $\ket{1}$ into a mixed state $\frac{1}{\sqrt{2}}(\ket{0} - \ket{1})$. If we apply $H$ to state $\ket{0}$ or state $\ket{1}$ and perform a measurement, then we have an equal probability (1/2) of observing $\ket{0}$ or $\ket{1}$.

\noindent {\bf The $R_n$ gate} (phase gate) is defined by the 2$\times$2 matrix,
$$R_n = \begin{bmatrix} 1 & 0 \\ 0 & e^{2\pi i/2^n} \end{bmatrix}$$
The action of $R_n$ gate is multiplication of the amplitude of state $\ket{1}$ by a phase factor $e^{2\pi i/2^n}$. In particular,
$$R_1 = Z = \begin{bmatrix} 1 & 0 \\ 0 & -1 \end{bmatrix}$$
and
$$R_2 = S = \begin{bmatrix} 1 & 0 \\ 0 & i \end{bmatrix}$$
For large $n$, $e^{2\pi i/2^n}$ is close to the unity, and $R_n$ gate is close to the identity gate {\bf I} (2$\times$2 identity matrix). 

\noindent {\bf The controlled-$R_n$} applies $R_n$ to a target qubit only if a control qubit is in the $\ket{1}$ state. In the matrix representation: 

$$\text{controlled-}R_n = \begin{bmatrix}
    1 & 0 & 0 & 0 \\
    0 & 1 & 0 & 0 \\
    0 & 0 & 1 & 0 \\
    0 & 0 & 0 & e^{2\pi i/2^n}
    \end{bmatrix}
    $$

\noindent 
{\bf Measurements:} During computations, the quantum computer remains in the state of superposition of the basis states. To obtain a result of computations, it is necessary to perform a measurement, which returns one of the basis states with a probability equal to the square of the amplitude of that state. That is, if a quantum computer is in a mixed state \(\ket{\psi} = \sum_{j=0}^{N-1} \alpha_j \ket{j}\), where \(\sum_{j=0}^{N-1} \alpha_j^2 = 1\), then, after the measurement \(\mathcal{M}\), the state collapses into one of the basis states \(\ket{j}\) with the probability \(\alpha_j^2\).

\subsection{Quantum state initialization}\label{state-preparation}

Quantum computations usually start with a basis state \(\ket{0^n} = \ket{0} \otimes \ket{0} \otimes \dots \otimes \ket{0}\). It is necessary to prepare (or initialize) an arbitrary state from the state \(\ket{0^n}\). Given state \(\ket{\psi}\), one needs to construct a unitary gate \(U\) such that \(U\ket{0^n} = \ket{\psi}\). A general approach for the state initialization is to employ the M\"ott\"onen method (used by default in Qiskit \cite{Qiskit_IBM}). It requires \(2^{n+2} - 4n - 4\) CNOT gates and \(2^{n+2}-5\) one-qubit elementary rotations to prepare a (generic) state of \(n\) qubits \cite{Mottonen2004_states}. More efficient state initialization methods are available if a state is not generic, e.g. if it contains only a few non-zero amplitudes, in which case the Gleinig-Hoefler method~\cite{Gleinig2021_sparse_states} can be used. Using this method, the initialization of a sparse state of \(n\) qubits, which has only \(S\) non-zero amplitudes, requires \(\bigO(Sn)\) gates.

\subsection{Quantum Fourier Transform}\label{qft}
\noindent
{\bf The QFT algorithm} computes the Fourier transform of a mixed state \(\ket{\psi} = \sum_{j=0}^{N-1} \alpha_{j}\ket{j} \), and outputs the mixed state in the Fourier domain, \(\ket{\phi} = \sum_{k=0}^{N-1} \beta_{k}\ket{k}\), with the coefficients
\[
  \beta_{k} = \frac{1}{\sqrt{N}} \sum_{j=0}^{N-1} \alpha_{j} \exp\Big(\frac{2\pi i j k}{N}\Big)
\]
The multidimensional version of Fourier transform \cite{garcia-molina_rodriguez-mediavilla2022} is
\begin{equation}\label{eq:multi-qft}
\ket{\mathbf{r}} \mapsto \frac{1}{\sqrt{N}} \sum_{\mathbf{s}} e^{i 2 \pi \mathbf{r} \cdot \mathbf{s}} \ket{\mathbf{s}},
\end{equation}
where \(\ket{\mathbf{r}}\) is a \(d\)-dimensional vector encoded as a collection of \(d\) shifts. The algorithm uses \(\bigO(d(\log N)^{2})\) gates to compute the quantum Fourier transformation.

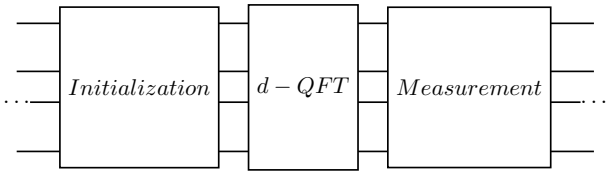
\begin{figure}[h]
  \centering
  \resizebox{\columnwidth}{!}{
      \begin{quantikz}
        \lstick{} & \gate[4]{Initialization} & \gate[4]{d-QFT} & \gate[4]{Measurement} & \qw\\
        \lstick{} & & & & \qw\\
        \dots &&&& \qw\dots\\
        \lstick{} & & & & \qw\\
      \end{quantikz}
  }
  \caption{The workflow of computation of the $E^{L}$-component of electrostatic energy $E$ (Eq. (\ref{eq:eq5})), which consists of the following three steps: the state initiation, the $d$-dimensional QFT, and the measurement.}
  \label{fig:overall-pipeline}
\end{figure}

\subsection{Numerical calculations}\label{sec:method-description}
\noindent
The total electrostatic interaction energy for a system of point charges, $E = E^{S} + E^{L} + E^{self}+ E^{dip}$, involves contributions from the energy terms $E^{S}$, $E^{L}$, $E^{self}$, and $E^{dip}$ (see Eq. (\ref{eq:eq5}) below). In the algorithm proposed, the $E^{L}$-contribution is calculated in a quantum computer, while the other energy terms are calculated in a classical computer (see Results). The calculation of \(E^L\) consists of 1) the state initialization, ii) the multidimensional Fourier transform, and iii) the measurement (displayed schematically in Fig. 1), with the scaling law that depends on different parameters. For example, the state initialization step depends on the number of charges \(N\), whereas the multidimensional Fourier transform depends on the number of grid points \(M\). In this study, we used standard state initialization and state measurement techniques implemented in Qiskit \cite{Qiskit_IBM}.

\section{Results}

\subsection{Ewald summation approach}\label{sec:problem-description}

Consider a system of $N$ charged particles $q_1,q_2,\ldots,q_N$ in vacuum with coordinates ${\bf{r}}_1, {\bf{r}}_2,\ldots,{\bf{r}}_N$. The total energy of the system of particles due to the electrostatic (Coulomb) interactions is given by
\begin{equation}
E=\frac{1}{4\pi\epsilon_0}\sum_{(i,j)}\frac{q_i q_j}{\|\bf{r}_{ij}\|}
\label{eq:eq3}
\end{equation}
where $\bf{r}_{ij}=\bf{r}_j-\bf{r}_i$ is the vector connecting the $i$-th and $j$-th particles, and $\epsilon_0$ is the dielectric constant. In Eq. (\ref{eq:eq3}), the summation is performed over all pairs $(i,j)$ of particles. We impose the periodic boundary conditions (PBC), described by the repeat vectors ${\bf{c}}_1,{\bf{c}}_2,{\bf{c}}_3$ forming a subcell. This implies that when there is a charge $q_i$ at location ${\bf{r}}_i$, there are also charges $q_i$ at locations in the image cells ${\bf{r}_i}+n_1{\bf{c}_1}+n_2{\bf{c}_2}+n_3{\bf{c}_3}$, where $n_1$, $n_2$, and $n_3$ are integers. The PBC is invoked to evaluate the double summation over $i\neq j$ in Eq. (\ref{eq:eq3}) above. The Coulomb interaction energy of the supercell (i.e. subcell plus the image cells) forming a simple cubic lattice $(n_1,n_2,n_3)$ under the PBC condition can be recast in the following form:
\begin{equation}
\begin{aligned}
    & E=\frac{1}{4\pi\epsilon_0}\sum_{(i,j)}\frac{q_i q_j}{\|\bf{r}_{ij}\|} \\
    & =\frac{1}{4\pi\epsilon_0}\sum_{\bf n}\sum_{(i,j)}\frac{q_i q_j}{\|\bf{r}_{ij}+{\bf n}L\|}
    \label{eq:eq4}
\end{aligned}
\end{equation}
where ${\bf n}L=n_1{\bf{c}_1}+n_2{\bf{c}_2}+n_3{\bf{c}_3}$, and $L=\|{\bf c}_1\|=\|{\bf c}_2\|=\|{\bf c}_3\|$ is the cell length. The summation in Eq. (\ref{eq:eq4}) shows very slow convergence and is only conditionally convergent.

The problem of slow conditional convergence of the double summation in Eq. (\ref{eq:eq4}) can be overcome by using the Ewald summation method~\cite{ewald1921evaluation}. The expression for $E$ can be recast into the following contributions, each converging rapidly and absolutely:

\begin{equation}
\begin{aligned}
  E &= E^{S} + E^{L} + E^{self} + E^{dip} \\
  &= \frac{1}{8\pi \varepsilon_{0}}\sum_{\textbf{n}}\sum_{i=1}^{N}\sum_{j=1}^{N}\phantom{}^{'}\frac{q_{i}q_{j}}{|\textbf{r}_{i} - \textbf{r}_{j} + \textbf{n}L|} \\
  &\quad \times\mathrm{Erfc} \Big( \frac{|\textbf{r}_{i} - \textbf{r}_{j} + \textbf{n}L|}{\sqrt{2}\sigma} \Big) \\
  &+ \frac{1}{2V\varepsilon_{0}}\sum_{\textbf{k$\neq$0}}\frac{e^{-\sigma^{2}k^{2}/2}}{k^{2}}|S(\mathrm{\textbf{k}})|^{2} \\ &\quad - \frac{1}{4\sqrt{2}\pi^{3/2}\sigma \varepsilon_{0}}\sum_{i=1}^{N}q_{i}^{2}\\
  &\quad + \frac{1}{2\varepsilon_{0}(1+2\epsilon')V}\Big(\sum_{i}^Nq_{i}\textbf{r}_{i}\Big)^2
\end{aligned}
\label{eq:eq5}
\end{equation}

\begin{figure}[ht]
    \centering
    \includegraphics[width=0.41\textwidth]{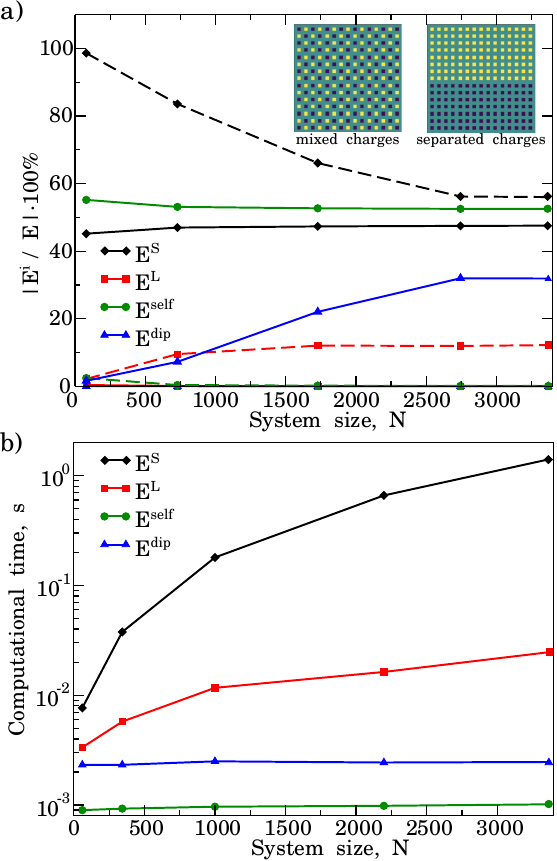} 
    \caption{Relative importance and computational time for electrostatic energy contributions (color denotation is explained in the graphs). $a)$ Contribution of different energy terms $E^{S}$, $E^{L}$, $E^{self}$, and $E^{dip}$ to the total electrostatic energy $E$ (Eq. (\ref{eq:eq5})) profiled as functions of the number of charges (system size) $N$. The dashed and solid lines correspond to charge configurations in which the positive charges (yellow pixels) and negative charges (black pixels) are mixed and separated, respectively (see {\it the insets} in panel $(a)$). $b)$ Computational time for calculation of $E^{S}$, $E^{L}$, $E^{self}$, and $E^{dip}$ as a function of $N$ mixed charges in a classical computer. Calculations were carried out in a 32$\times$32$\times$32 grid in the 3$d$-space. The cutoff distance was set to $R_{cut}=$ 9\AA.}
\end{figure}

\noindent where $\sigma$ is the standard deviation of the Gaussian function, ${\bf k}$ is the wavevector ($k=\|{\bf k}\|$), $V$ is the volume of the supercell, $S(\mathrm{\textbf{k}})$ is the structure factor, and $\epsilon'$ is the dielectric constant of the medium. The sum over ${\bf n}$ takes into account the periodic images of the charges and the prime indicates that in case $i = j$ the term ${\bf n}=0$ must be omitted. In Eq. (\ref{eq:eq5}), the summation in the first term $E^S$ is short-ranged in the real space (i.e. short-range electrostatics), since each term is truncated by the associated error function $\text{Erfc}(x)=1-\text{Erf}(x)$ ($\text{Erf}(x)$ is the error function defined by $\text{Erf}(x)=2/\sqrt{\pi}\int_0^x e^{-t^2}dt$). The summation in the second term $E^L$ is also short-ranged but in the reciprocal space (long-range electrostatics), since each term is truncated by the Gaussian function $e^{-\sigma^2k^2/2}$. To separate the total electrostatic energy ($E$) into the short-range contribution ($E^S$) and long-range contribution ($E^L$), typically, one uses the cutoff distance $R_{cut}=$ 9\AA. The third self-interaction energy term $E^{self}$ and the fourth dipole-interaction term $E^{dip}$ in Eq. (\ref{eq:eq5}) are not difficult to evaluate.

\subsection{Classical versus quantum algorithm}\label{sec:classical-ewald}

We propose a new algorithm for solving the Ewald summation problem on a quantum computer to evaluate the electrostatic energy $E$. We use the same decomposition of $E$ into several summations as in the Ewald method described above (Eq. (\ref{eq:eq5})). In this algorithm, the energy terms $E^{S}$, $E^{self}$, and $E^{dip}$ are computed using the classical algorithm, while the energy term $E^L$ is computed using the quantum computing to reduce the computational complexity. Fig. 2 shows that, depending on a charge distribution, $E^L$ accounts for $\sim$12$\%$ of the total energy $E$, and that calculation of $E^L$ is one of the computational bottlenecks. In the algorithm proposed, we utilize the Quantum Fourier Transform~\cite{nielsen2010quantum}. 

The computational complexity of the classical Ewald summation method \cite{ewald1921evaluation,de1980simulation} scales as $\bigO(N^{3/2})$ with the system size $N$. For a more advanced Particle Mesh Ewald (PME) method \cite{darden1993pme}, the computational complexity scales as $\bigO(N\log N)$. Therefore, we can formulate the following statement.

\begin{Theorem}
\label{thm:classic_Ewald}
In the Ewald summation method, $E^L$ is evaluated using fast Fourier transform, and the computational complexity is \[ T_C(N)= \bigO(N\log N).\]
\end{Theorem}

\begin{proof}
Let us choose parameter \(\sigma\) (see Eq. (\ref{eq:eq5})) in such a way that contributions from pairs of charges $q_i$ and $q_j$ located at \(\textbf{r}_{i}\) and \(\textbf{r}_{j}\) to \(E^{S}\) are negligible. This can always be done. For example, if the minimal distance \(r_{\text{min}}\) between charges is known, then \(\sigma\) can be chosen such that \(\sigma \ll r_{\text{min}}\). Then, the contributions to \(E^{S}\) from the terms with \(i \neq j\) are negligible because of an exponential decay in \(\mathrm{Erfc}\), and \(E^S\) only contains $N$ energy terms with \(i = j\). Then, the sum in \(E^{S}\) can be computed in \(\bigO(N)\) time. The sums in \(E^{self}\) and in \(E^{dip}\) can also be computed in \(\bigO(N)\) time. Next, to compute \(E^{L}\), the PME method~\cite{darden1993pme} uses interpolation of charges \(q_{i}\) on a grid. The total charge is described by a distribution
\[
Q(\mathbf{\ell}) = \sum_{j=1}^{N} q_{j} \theta_{p}(\mathbf{r}_{i,x},\mathbf{\ell}_{x}) \theta_{p}(\mathbf{r}_{i,y},\mathbf{\ell}_{y}) \theta_{p}(\mathbf{r}_{i,z},\mathbf{\ell}_{z}),
\]
where \(\theta_{p}\) are obtained from the weights of the \(p\)-th order Lagrangian interpolation  as
\[
\theta_{p}(x,k) = \phi_p(x-\kappa_p(x), k-\kappa_p(k)),
\]
with function \(\phi_p(x,k)\) defined by
\[
\phi_p(x,k) = \frac{(-1)^k \binom{2p-1}{k} \frac{1}{x-k/M}}{\sum_{l=0}^{2p-1}(-1)^l \binom{2p-1}{l}\frac{1}{x-l/M}}
\]
and with \(\kappa_p(x)\) being the integer function \(\kappa_p(x) = \lceil Mx \rceil - p + 1\).
Then, in the PME method the summation in \(E^{L}\) is expressed as a discrete convolution of \(Q(l)\) and an influence function
\begin{equation*}
\begin{split}
\Phi_{\mathrm{rec}}(\mathbf{f}; \sigma) &= \frac{1}{\pi V} \sum_{{\mathbf{m}\neq 0}\atop\mathbf{m}\in \mathbb{Z}^3} \frac{\exp(-\pi^{2}\sigma^{2}\mathbf{m}^{2})}{\mathbf{m}^{2}} \\
& \times \exp(2\pi i (\mathbf{m} \cdot \mathbf{f})).
\end{split}
\end{equation*}
which does not depend on the charge distribution $Q$. Since the convolution of $Q(l)$ with $\Phi_{\mathrm{rec}}$ can be computed using the fast Fourier transform in \(\bigO(N\log N)\) time, then \(E\) can be computed in \(\bigO(N) + \bigO(N\log N) + \bigO(N) + \bigO(N) = \bigO(N\log N)\) time.
\end{proof}

\subsection{Hybrid quantum-classical algorithm} \label{sec:hybrid-algo}

In the proposed algorithm for the calculation of electrostatic energy $E$, the QFT is used to compute the second term in Eq.~(\ref{eq:eq5}), $E^L$$=$$\frac{1}{2V\varepsilon_{0}}\sum_{\textbf{k$\neq$0}}\frac{e^{-\sigma^{2}k^{2}/2}}{k^{2}}|S(\mathrm{\textbf{k}})|^{2}$ (long-range electrostatics). The quantum Fourier transform algorithm provides an exponential advantage over the classical algorithm, by executing the Fourier transform on \(N\) points in \(\bigO((\log N)^{2})\) time. This time should be compared with \(\bigO(N \log N)\) time it takes the classical algorithms to carry out the same task (see previous section). However, the necessity of inputing data into a quantum computer limits somewhat the computational advantage obtained using the QFT algorithm. The information about the final quantum state can be obtained only through a measurement. Therefore, we must consider how many times the quantum transformation has to be performed (on copies of initial state) to reliably extract the result obtained, in order to preserve the quantum advantage.  

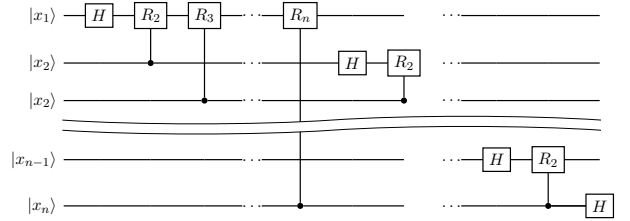
\begin{figure}[ht]
   \resizebox{\columnwidth}{!}{
       \begin{quantikz}
        \lstick{$|x_1\rangle$} & \gate{H} & \gate{R_2} & \gate{R_3} & \qw\ldots & \gate{R_n} & \qw      & \qw        & \ldots & \qw & \qw & \qw\\
        \lstick{$|x_2\rangle$} &  \qw     & \ctrl{-1}  & \qw        & \qw\ldots & \qw        & \gate{H} & \gate{R_2} & \ldots & \qw & \qw & \qw\\
        \lstick{$|x_2\rangle$} &  \qw     & \qw        & \ctrl{-2}  & \qw\ldots & \qw        & \qw      & \ctrl{-1}  & \ldots & \qw & \qw & \qw\\
        \wave &&&&& &&&&& &\\
        \lstick{$|x_{n-1}\rangle$}&\qw     & \qw       & \qw        & \qw\ldots & \qw        & \qw       & \qw        & \ldots & \gate{H} & \gate{R_2} & \qw\\
        \lstick{$|x_n\rangle$} &  \qw     & \qw        & \qw        & \qw\ldots & \ctrl{-5}  & \qw       & \qw        & \ldots & \qw     & \ctrl{-1}   & \gate{H}\qw
      \end{quantikz}
  }
  \caption{Quantum circuit for Quantum Fourier transform. The states $|x_1\rangle, |x_2\rangle, \ldots, |x_n\rangle$ comprise the computational basis set. The gate \(H\) denotes Hadamard gate, and \(R_n\) denotes the phase gate, which rotates the state by \(2\pi / 2^n\) radians around the \(z\)-axis (see Methods).}
  \label{fig:qft}
\end{figure}

\smallskip
\noindent
{\bf QFT algorithm complexity:} The QFT circuit (Fig. 3) that computes Fourier transform of an \(n\)-qubit register contains \(\bigO(n^2)\) gates: Hadamard gate \(H\) and \(n-1\) controlled rotations \(R_k\) for the first qubit, Hadamard gate and \(n-2\) controlled rotations for the second qubit, etc. Therefore, the total number of gates is \(\bigO(n^2)\). QFT operates with \(N = 2^n\) amplitudes, and the corresponding classical computation of fast Fourier transform requires \(\bigO(N \log N)\) operations, which is exponentially large. Similarly, the \(d\)-dimensional Fourier transform can be computed by applying \(d\) copies of the QFT circuit \cite{garcia-molina_rodriguez-mediavilla2022}. Therefore, the gate complexity of the \(d\)-dimensional version QFT is \(\bigO(dn^2)\).

\smallskip
\noindent
{\bf Quantum algorithm for computation of \(E^{L}\):} Let us focus on the \(E^{L}\)-part of electrostatic energy $E$ (see Eq.~(\eqref{eq:eq5}),
\[
E^{L} = \frac{1}{2V\varepsilon_{0}}\sum_{{\textbf{k$\neq$0}}\atop{\textbf{k}\in 2\pi\mathbf{Z}^3}}\frac{e^{-\sigma^{2}k^{2}/2}}{k^{2}}|S(\mathrm{\textbf{k}})|^{2},
\]
which is one of the computational bottlenecks (see Fig. 2b). In the expression for $E^{L}$, the structure factor 
\[
S(\mathbf{k}) = \sum_{j=0}^{N-1} q_{j} e^{i \mathbf{k} \cdot \mathbf{r}_{j}}
\]
(associated with a charge distribution) can be readily computed following {\it Steps 1--3} below: 

\smallskip
\noindent
{\it Step 1.} Generate discretization, normalize and encode a set of point charges \(q_{i}\) in a quantum register \[ \ket{\psi} = \frac{1}{\|\mathbf{q}\|}\sum_{j} q_{j}\ket{\mathbf{r}_{j}} ,\]
where \(\|q\| = \sqrt{\sum_{j=0}^{N-1} q_j^2}\) is the norm of the vector of charges.

\noindent
{\it Step 2.} Apply the multidimensional Fourier transform (Eq. {\eqref{eq:multi-qft}) to obtain state
\begin{align*}
\ket{\psi'}
&= \sum_{j} \frac{q_{j}}{M^{d/2}\|q\|} \Big( \sum_{\mathbf{k}} e^{2\pi i \mathbf{r}_{j} \cdot \mathbf{k}} \ket{\mathbf{k}} \Big) =\\
&= \sum_{\mathbf{k}} \Big( \sum_{j} \frac{q_{j}}{M^{d/2}\|q\|} e^{2\pi i \mathbf{r}_{j} \cdot \mathbf{k}} \Big) \ket{\mathbf{k}}
\end{align*}

\noindent
{\it Step 3.} Measure and register the resulting probabilities \(p_\mathbf{k} = \big| \sum_j \frac{q_j}{M^d\|q\|^2} e^{2\pi i \mathbf{r}_j \cdot \mathbf{k}} \big|^2\) for wave vectors \(\mathbf{k}\).

\subsection{Formalization} 

We place $N$ point charges on a cubic grid of size $M^3$. The positions of charges $q_j$, \(\mathbf{r}_{j}\), are given by \(\mathbf{r}_{jx} = L_{x}x_{j}/M\), \(\mathbf{r}_{jy} = L_{y}y_{j}/M\), \(\mathbf{r}_{jz} = L_{z}z_{j}/M\), where \(x_{j}\), \(y_{j}\) and \(z_{j}\) are the coordinates, and \(L_{x}\), \(L_{y}\) are \(L_{z}\) are the cell dimensions. Therefore, we can 
encode their locations \(\mathbf{r}_{j}\) in the 3D-space using \(3 \log M\) bits of information.

The quantum state $\ket{\psi}$ describes the point charges as follows. The basis states of the state
\[
\ket{\psi} = \sum_{j} \frac{q_{j}}{\|q\|}\ket{\mathbf{r}_{j}},
\]
are  \(3\log M\) qubits states of the form \(\ket{\mathbf{r}_{j}} = \ket{x_{j}}\ket{y_{j}}\ket{z_{j}}\), where  each coordinate is encoded by \(\log M\) qubits with $N$ nonzero amplitude coefficients $q_j / \|q\|$. Then, the multidimensional Fourier transform outputs the state
\begin{align*}
\ket{\psi'}
&= \sum_{j} \frac{q_{j}}{\|q\|} \Big( \frac{1}{M^{d/2}}\sum_{\mathbf{k}} e^{2\pi i \mathbf{r}_{j} \cdot \mathbf{k}} \ket{\mathbf{k}} \Big) =\\
&= \sum_{\mathbf{k}} \Big( \sum_{j} \frac{q_{j}}{M^{d/2}\|q\|} e^{2\pi i \mathbf{r}_{j} \cdot \mathbf{k}} \Big) \ket{\mathbf{k}}.
\end{align*}
Finally, we measure the state \(\ket{\psi'}\) to obtain the approximation of the structure factor \(S(\mathbf{k})\). We perform \(K\) repetitions of the same procedure, i.e. state initialization, multidimensional QFT, and measurement. Each repetition returns a measured value \(\ket{\mathbf{k}}\), so after \(K\) repetitions we have an approximation \(\tilde{p}_k\) of the resulting probability distribution: \(\tilde{p}_\mathbf{k} = c_\mathbf{k} / K \), where \(c_\mathbf{k}\) is the number of times we measured \(\ket{\mathbf{k}}\).

\subsection{Algorithm complexity}\label{sec:complexity-quantum}

\begin{Theorem}\label{thm:quantum_Ewald}
In the quantum implementation, $E^{L}$ is computed using QFT with the complexity 
\[
T_Q(K, M, N)=\bigO \big( Kd\log M( N + \log M) \big), 
\]
where $K$ is the number of repetitions, $d$ is the dimension of the system, $N$ is the system size, and $M$ is the grid size.
\end{Theorem}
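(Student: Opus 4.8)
The plan is to tally the gate cost of each of the three stages that make up one repetition of the procedure of Section~\ref{sec:hybrid-algo} (sparse state initialization, $d$-dimensional QFT, and measurement), add them, and then multiply by the number of repetitions $K$.

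First I would bound the state-initialization cost. The state $\ket{\psi}=\sum_{j}\frac{q_{j}}{\|q\|}\ket{\mathbf{r}_{j}}$ is carried by $n=d\log M$ qubits, since each of the $d$ coordinates of $\mathbf{r}_{j}$ is encoded on $\log M$ qubits, and it has at most $N$ nonzero amplitudes (exactly $N$ when no two charges share a grid point, fewer otherwise). Because the state is sparse, I invoke the Gleinig-Hoefler routine recalled in Section~\ref{state-preparation}, which prepares an $S$-sparse state on $n$ qubits using $\bigO(Sn)$ gates; with $S=N$ and $n=d\log M$ this gives an initialization cost of $\bigO(Nd\log M)$.

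Next I would bound the transform and the readout. By the gate count recalled in Section~\ref{qft} and in the ``QFT algorithm complexity'' paragraph, a $d$-dimensional QFT acting on $d$ registers of $\log M$ qubits each costs $\bigO(d(\log M)^{2})$ gates, and measuring all $n=d\log M$ qubits adds a further $\bigO(d\log M)$, which is dominated by the preceding terms. Summing, one repetition costs $\bigO(Nd\log M)+\bigO(d(\log M)^{2})+\bigO(d\log M)=\bigO\bigl(d\log M\,(N+\log M)\bigr)$; performing $K$ independent repetitions, each starting from a fresh $\ket{0^{n}}$ and re-executing the initialization, QFT and measurement pipeline, multiplies this by $K$ and yields $T_{Q}(K,M,N)=\bigO\bigl(Kd\log M\,(N+\log M)\bigr)$, as claimed.

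The step I expect to be the main obstacle is the state-initialization bound: one must argue that the charge-encoding state is genuinely $N$-sparse \emph{independently of the grid size $M$}, so that the $\bigO(Sn)$ sparse-preparation estimate applies in place of the generic $\bigO(2^{n})$ M\"ott\"onen cost, and note that grid collisions can only lower the number of nonzero amplitudes, so the bound is never exceeded. A secondary bookkeeping point is reconciling notation: the ``$N=2^{n}$'' of the generic QFT discussion corresponds here to $M$ amplitudes \emph{per coordinate}, i.e. $n=\log M$ qubits per register, so that the transform cost carries a factor $(\log M)^{2}$ rather than $(\log N)^{2}$.
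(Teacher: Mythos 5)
Your proposal is correct and follows essentially the same route as the paper: it invokes the Gleinig--Hoefler $\bigO(Sn)$ sparse-state preparation with $S=N$ and $n=d\log M$, charges $\bigO(d(\log M)^2)$ for the $d$-dimensional QFT, and multiplies the per-repetition total by $K$ to obtain $\bigO\big(K(Nd\log M + d(\log M)^2)\big)=\bigO\big(Kd\log M(N+\log M)\big)$. Your additional remarks on measurement cost and on the sparsity being at most $N$ regardless of grid collisions are harmless refinements the paper leaves implicit.
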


\begin{proof}
To compute \(E^{L}\), we perform \(K\) repetitions of the quantum circuit encoding the charge distribution and performing multi-dimensional QFT. In the Gleinig-Hoefler method~\cite{Gleinig2021_sparse_states}, preparation of a sparse state of \(n\) qubits which has only \(S\) non-zero amplitudes requires \(\bigO(Sn)\) gates. In our case, the number of qubits \(n = d \log M\) and the state has \(S = N\) non-zero amplitudes, and so \(\bigO(Nd\log M)\) gates are required. The $1d$ QFT requires \((\log M)^2\) as a circuit depth (Fig. 3) because the number of qubits is \(n = \log M\). Therefore, we obtain the complexity \( T_{Q}(K, M, N) = \bigO \big( Kd( N\log M + (\log M)^2 )\big)\) quantum gates.
\end{proof}

The state representing the charge distribution has a few non-zero amplitudes. Therefore, it is more efficient to use special methods for this type of states. In our case, the number of qubits is \(n = \log M\) (and the number of charges is \(N\)), and so the generic (M\"ott\"onen) state preparation requires \(8M-4\log M-9\) gates, whereas the Gleinig-Hoefler method requires \(N\log M\) gates.
In our case, we have \(N < M^3\) (more precisely, \(N \ll M^3\)); hence, using the Gleinig-Hoefler method is more computationally efficient.

\begin{figure}[h]
    \centering
    \includegraphics[width=0.45\textwidth]{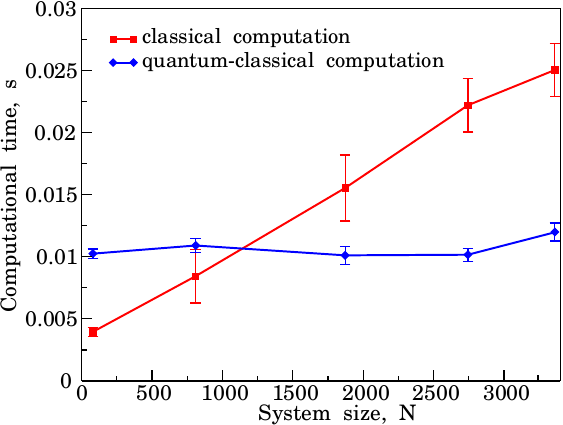} 
    \caption{The profiles of computational time (averages and standard deviations) for calculation of the $E^L$ term (see Eq. (\ref{eq:eq5})) using the classical algorithm and the quantum algorithm proposed (color denotation is explained in the graph). Calculations were carried out for the configuration of mixed charges (see the {\it inset} to Fig. 2) on a 32$\times$32$\times$32 grid in 3$d$-space. We used the cutoff distance $R_{cut}=$ 9\AA.}
\end{figure}

\noindent Therefore, in the Ewald summation approach to electrostatic energy calculation, the QFT provides saving in computational complexity. Comparing the statements in Theorems 1 and 2  and taking into account that we assume \(K\) (the number of repetitions) and \(d\) (the system dimension) to be constant, we see that we obtain the algorithm complexity \(\bigO(N\log M)\) for the quantum algorithm versus \(\bigO(N\log N)\) for the classical algorithm. We profiled the computational time for the classical and quantum-classical algorithms in Fig. 4, which shows that the quantum-classical algorithm has an advantage over the classical counterpart when $\log N$ exceeds $\log M$. To estimate the computational time on a quantum computer $T_{\mathrm{q}}$, we followed the procedure outlined in Ref~\cite{QML_Astrophysics}. We used a quantum computer emulator for which the computational time is $T_{\mathrm{em}}$. The quantum computational time is given by $T_{\mathrm{q}} = T_{\mathrm{em}} t_{\mathrm{q,1}} / t_{\mathrm{em,1}}$, where $t_{\mathrm{em,1}}$ is the time for single gate calculation on the emulator, and $t_{\mathrm{q,1}}$ = 50 ns is the time for single gate execution on a quantum computer (IQM reports $20$--$40$ ns)\cite{IQM}.

\section{Discussion and Conclusion}

Quantum computing holds an immense potential to tackle complex biophysical and biological
problems with exponentially large solution spaces~\cite{harris_kendon2010}, and can provide a computational advantage over traditional classical algorithms. This is in addition to quantum hashing approaches~\cite{Ablayev_Khadiev_Vasiliev_Ziiatdinov_2025_Quantum_Hashing}, which offer significant savings in quantum memory. Here, we proposed a quantum-classical algorithm for accurate calculation of electrostatic interactions for a system of point charges, in which the long-range interactions (energy $E^L$ in Eq. (\ref{eq:eq5})) are computed on a quantum device. The short-range interactions ($E^S$) are computed on a classical device along with the self-interaction and dipole interaction energies ($E^{self}$ and $E^{dip}$ in Eq. (\ref{eq:eq5})). 

A common approach to treat the long-range electrostatics is to employ the particle mesh Ewald method in conjunction with the periodic boundary conditions (PBC). Under PBC, a molecular system (e.g. biomolecule) is placed in a unit cell (solvation box), which is replicated in the $x-$, $y-$, and $z-$directions to generate image cells filling the 3d-space. The Ewald summation methods compute electrostatic interaction energy without truncating the energy at long distances, while also avoiding enumeration of all charge pairs $i$ and $j$. It is well known that infinite series in Eqs. (\ref{eq:eq3})-(\ref{eq:eq4}) are poorly converging; yet, employing the Ewald summation methods enables one to overcome this problem. 

The idea behind the Ewald approaches (i.e particle-particle-particle-mesh (P$^3$M) method~\cite{eastwood1980p3m3dp}, the particle mesh Ewald (PME) method~\cite{darden1993pme}, and the smooth particle mesh Ewald (SPME) method~\cite{essmann1995spme}) is to split the electrostatic energy into several energy components. The $E^S$-term in Eq. (\ref{eq:eq5}) represents interactions of point charges with other point charges partially screened by the screening charge clouds. The $E^L$-term describes interactions of point charges with the compensating charge clouds. Because the distribution of compensating charge clouds is periodic, $E^L$ can be calculated using Fourier transform in the reciprocal space. $E^{self}$ is associated with the interaction of point charges with their own compensating charge clouds, whereas $E^{dip}$ accounts for the dipolar contribution to the electrostatic energy. Since $E^S$, $E^{self}$ and $E^{dip}$ scale linearly with the number of charges ($N$), they can be computed on a classical device, while $E^L$ -- the Fourier part of the total electrostatic energy $E$, can be calculated on a quantum device. To compute $E^L$, which accounts for up to 12$\%$ of electrostatic energy (Fig. 2a) and presents a computational bottleneck (Fig. 2b), we utilized the power of Quantum Fourier transform.

\begin{figure}[h]
    \centering
    \includegraphics[width=0.45\textwidth]{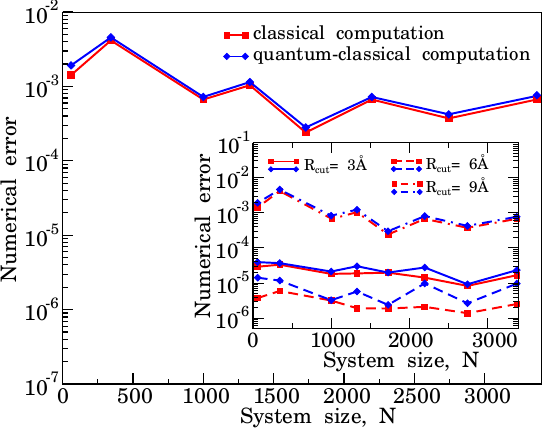} 
    \caption{Numerical accuracy for computation of electrostatic energy for a system of $N$ point charges using quantum-classical and classical algorithms. Profiled is a relative error ${(|E_{ex}-E_{app}|}) /{E_{ex}}$ (averages and standard deviations), where $E_{ex}$ is the exact electrostatic energy calculated using direct summation (Eq. (\ref{eq:eq3})), and $E_{app}$ is the approximate energy obtained with quantum-classical and classical algorithms (color denotation is explained in the graph). Calculations were performed for the configuration of mixed charges (see the {\it inset} to Fig. 2) on a 32$\times$32$\times$32 grid in the 3$d$-space (cutoff distance was set to $R_{cut}=$ 9\AA). {\it The inset} compares the profiles of relative error for variable cutoff distance $R_{cut}=$ 3, 6, and 9\AA.}
\end{figure}

We were able to demonstrate a quantum advantage of the quantum-classical algorithm proposed over the classical algorithm (Fig. 4). Because electrostatic interactions are the strongest interactions in condensed phase physical, chemical and biological systems, it is important to compute electrostatic energy (or force) contributions both efficiently and accurately. Therefore, we compared the numerical accuracy associated with the computation of electrostatic energy using the quantum-classical algorithm proposed and the classical algorithm by profiling the relative error ${(|E_{ex}-E_{app}|})/{E_{ex}}$ associated with the approximate treatment of electrostatic energy $E_{app}$ (using the quantum-classical and classical algorithms) and exact electrostatic energy $E_{ex}$ obtained through direct summation (Eq. (\ref{eq:eq3})). The results are displayed in Fig. 5, which shows that the numerical error below $10^{-3}-10^{-2}$. 

\begin{figure}[h]
    \centering
    \includegraphics[width=0.45\textwidth]{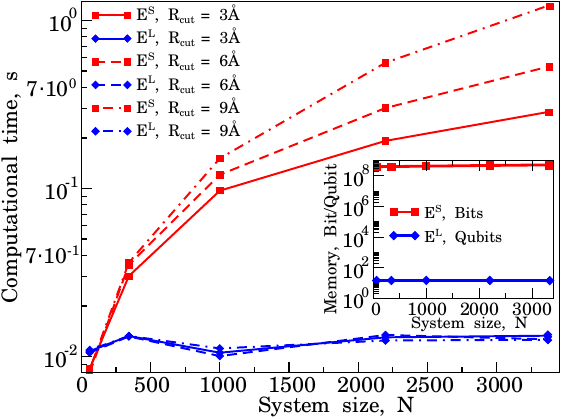} 
    \caption{Profiles of computational time and memory demand (the {\it inset}) associated with calculation of the short-range $E^S$ and long-range $E^L$ electrostatic energy terms for a system of $N$ point charges on a $32 \times 32 \times 32$ grid in the 3d-space (color denotation is explained in the graph) for variable cutoff distance $R_{cut}=$ 3, 6, and 9\AA. Calculations were performed for the configuration of mixed charges (see the {\it inset} to Fig. 2).}
\end{figure}

In the all-atom Molecular Dynamics schemes~\cite{jorgensen1996opls, brooks2009charmm, GROMOS, salomon2013amber}, the cutoff distance ($R_{cut}=$ 9-10\AA) is typically used to divide the total electrostatic energy ($E$) into the short-range contribution ($E^S$) and long-range contribution ($E^L$). The use of $R_{cut}$ is also related to generation of the Verlet lists of the nearest-neigbor atoms for different atoms. This choice of value for $R_{cut}$ (9-10\AA) is due to the fact that in condensed phases the inter-atomic (mainly electrostatic) interactions become weak at long separation distances ($>$9-10\AA). We profiled the dependence of numerical error on the cutoff distance for $R_{cut}=$ 3, 6, and 9\AA. The results obtained show that both for the quantum-classical and classical algorithms the numerical error is comparable and low ($10^{-5}-10^{-2}$), and that numerical error does not much vary with the system size $N$ (see the {\it inset} in Fig. 5) for $R_{cut}=$ 3-9\AA. 

In the context of quantum computing of electrostatic interactions, $R_{cut}$ can be viewed as a tunable parameter which can be used to vary the computational burden associated with the calculation of $E^S$- and $E^L$-energy terms. The shorter $R_{cut}$, the fewer terms are in the summation for $E^S$ and the more terms are in the summation for $E^L$ (see Eq. (\ref{eq:eq5})). Indeed, Fig. 6 shows that the computational time decreases for $E^S$ about 5-7-fold, yet barely changes for $E^L$ (because $E^L$ is computed on a quantum device) for $N=$2,000-3,500 point charges when $R_{cut}$ is decreased from 9 to 3\AA. At the same time, the amount of memory associated with the computation of $E^S$- and $E^L$-energy terms remains nearly constant (see the {\it inset} to Fig. 6). Hence, small values of $R_{cut}$ should be used to speedup the calculation of electrostatic interactions on a quantum computer. 

Quantum Fourier transform is an important tool used in a variety of scientific applications~\cite{wakeham_schuld_2024_qft_qml, shen2021qfcnn, vorobyov2021qft_sensing, farsian2025qft_astrophysics} in general, and in quantum algorithms~\cite{Shor_1997,garcia-molina_rodriguez-mediavilla2022}, in particular. The QFT-based algorithms are broadly used as quantum procedures in many quantum algorithms' applications~\cite{Ruiz_Perez_2017,Shakeel2020qft_qwalk,Arsoski2024qft_mcx}. The QFT algorithm provides an exponential advantage in computational complexity over the classical implementations of discrete quantum transform~\cite{nielsen2010quantum}. One of the well-known QFT algorithms is the Shor's quantum factorization algorithm \cite{Shor_1997}. In this study, we have successfully demonstrated that QFT can also be used to speedup computations of the electrostatic interactions in condensed phases, thereby expanding the scope of potential applications of QFT-based methods in theoretical biophysics, computational chemistry and biology.

To conclude, we developed and tested a new algorithm for both fast, yet, accurate computation of Coulomb electrostatic energy for a system of point charges on a quantum computer. The algorithm utilizes the Ewald-type splitting of electrostatic energy into various energy terms, of which ``the Fourier part'' of the electrostatic energy (long-range electrostatics) is computed using the Quantum Fourier transform. We have demonstrated the quantum advantage of the algorithm proposed (Fig. 4) over the classical algorithm for systems of charged particles when the number of 3$d$-grid points ($M^3$) exceeding the system size ($N$), and have assessed the algorithm's numerical accuracy by showing law overall numerical error (Fig. 5). The quantum advantage might be enhanced by calculating other electrostatic energy terms on a quantum computer (e,g, $E^S$ in Eq. (\ref{eq:eq5})). The accuracy of calculation of the Fourier component of electrostatic energy can be further improved by adopting the method of ``classical shadows''~\cite{Huang_2020}.

\section*{Data availability}
No datasets were generated or analyzed during the current study.

\section*{Acknowledgements}
M.Z. acknowledges financial support from the scholarship "Design and development of intelligent solutions for Industry 5.0" program number FTE0000382 -- CUP B47H22004430008 -- COR 22573728.

\section*{Author contributions}
M.Z. contributed to the development and implementation of the quantum algorithm on a quantum computer emulator, and wrote the paper. I.N. implemented the Ewald summation based algorithm, performed numerical calculations, and wrote the paper. F.A. formulated the problem, contributed to the development of quantum algorithm and wrote the paper. V.B. formulated the problem, contributed to the development of quantum algorithm and wrote the paper.

\section*{Conflicts of interest}
The authors declare no conflicts of interest.

\bibliographystyle{quantum}

\end{document}